\documentclass{article}

\usepackage{microtype}
\usepackage{graphicx}
\usepackage{subcaption}
\usepackage{booktabs} 
\usepackage{multirow} 
\usepackage{enumitem} 

\usepackage{booktabs}
\usepackage{pifont}
\usepackage{amsmath} 
\usepackage{multirow}

\usepackage{hyperref}

\usepackage[preprint]{icml2026}

\usepackage{amsmath}
\usepackage{amssymb}
\usepackage{mathtools}
\usepackage{amsthm}
\newtheorem*{proposition*}{Proposition} 

\usepackage{tikz}
\usetikzlibrary{shapes.geometric,arrows.meta,positioning,fit,calc,backgrounds}

\usepackage[capitalize,noabbrev]{cleveref}

\theoremstyle{plain}
\newtheorem{theorem}{Theorem}[section]
\newtheorem{proposition}[theorem]{Proposition}

\theoremstyle{definition}

\theoremstyle{remark}

\newcommand{\cmark}{\ding{51}}
\newcommand{\xmark}{\ding{55}}

\usepackage[textsize=tiny]{todonotes}

\icmltitlerunning{DSFlow: Dual Supervision and Step-Aware Architecture}

\begin{document}

\twocolumn[
  \icmltitle{DSFlow: Dual Supervision and Step-Aware Architecture \\ for One-Step Flow Matching Speech Synthesis}

  \icmlsetsymbol{equal}{*}

  \begin{icmlauthorlist}
    \icmlauthor{Bin Lin}{StepFun}
    \icmlauthor{Peng Yang}{StepFun}
    \icmlauthor{Chao Yan}{StepFun}
    \icmlauthor{Xiaochen Liu}{StepFun}
    \icmlauthor{Wei Wang}{StepFun}
    \icmlauthor{Boyong Wu}{StepFun}
    \icmlauthor{Pengfei Tan}{StepFun}
    \icmlauthor{Xuerui Yang}{StepFun}
  \end{icmlauthorlist}

  \icmlaffiliation{StepFun}{StepFun}

  \icmlkeywords{Flow Matching, Knowledge Distillation, Text-to-Speech, Generative Models, Classifier-Free Guidance}

  \vskip 0.3in
]



\printAffiliationsAndNotice{}  

\begin{abstract}
Flow-matching models have enabled high-quality text-to-speech synthesis, but their iterative sampling process during inference incurs substantial computational cost. Although distillation is widely used to reduce the number of inference steps, existing methods often suffer from process variance due to endpoint error accumulation. Moreover, directly reusing continuous-time architectures for discrete, fixed-step generation introduces structural parameter inefficiencies.
To address these challenges, we introduce DSFlow, a modular distillation framework for few-step and one-step synthesis. DSFlow reformulates generation as a discrete prediction task and explicitly adapts the student model to the target inference regime. It improves training stability through a dual supervision strategy that combines endpoint matching with deterministic mean-velocity alignment, enforcing consistent generation trajectories across inference steps. In addition, DSFlow improves parameter efficiency by replacing continuous-time timestep conditioning with lightweight step-aware tokens, aligning model capacity with the significantly reduced timestep space of the discrete task.
Extensive experiments across diverse flow-based text-to-speech architectures demonstrate that DSFlow consistently outperforms standard distillation approaches, achieving strong few-step and one-step synthesis quality while reducing model parameters and inference cost.
\end{abstract}

\section{Introduction}

Flow Matching models~\cite{lipman2023flow,liu2023flow,gat2024discrete,liu2025flow, guo2025splitmeanflow} have emerged as a powerful framework for text-to-speech synthesis, achieving high-quality generation through iterative refinement along learned probability flows. Similar to their success in image~\cite{esser2024scaling, Flow-albergo2023, RF-liu2022} and video generation~\cite{bar2024lumiere,kong2024hunyuanvideo,VideoFlow-Jin2024}, these models learn to map data distributions through continuous flows. However, flow-based models face a fundamental trade-off between sample quality and inference efficiency: their iterative sampling process requires tens to hundreds of neural function evaluations (NFEs), leading to high latency that prohibits real-time deployment in applications such as voice assistants and interactive dialogue systems.

To address this efficiency bottleneck, recent work has explored techniques to reduce NFEs while maintaining generation quality. Consistency models~\cite{salimans2022progressive,song2023consistency,facm-peng2025, scm, liu2024audiolcm} achieve few-step generation by learning to map any point along a sampling trajectory directly to its endpoint. During distillation, the student is supervised to match the teacher's final output regardless of the starting timestep. However, this endpoint-only supervision provides sparse learning signals: gradient information must propagate backward through accumulated errors across all sampling steps, leading to training instability. An alternative approach, MeanFlow~\cite{geng2025mean}, models the average velocity field rather than instantaneous velocity, enabling one-step generation trained from scratch without distillation. While MeanFlow has shown strong results in image generation, its training process requires Jacobian-vector products (JVPs)---a computationally expensive operation that significantly increases memory consumption and is incompatible with customized CUDA operators. 

In this work, we propose DSFlow, a distillation framework that addresses these limitations through improved supervision and architectural adaptation. We introduce dual supervision, combining endpoint matching with velocity field alignment to provide complementary learning signals that enhance training stability. Drawing from MeanFlow's average velocity formulation, we develop a JVP-free implementation that retains dense trajectory supervision while eliminating computational overhead. Finally, we propose step-aware tokens: rather than inheriting continuous-time embeddings, we employ learnable discrete representations corresponding to the fixed sampling steps, achieving substantial parameter efficiency and enabling step-specific specialization. Our contributions can be summarized as:

\begin{itemize}[leftmargin=*, itemsep=1pt, parsep=0pt, topsep=1pt]
    \item We propose dual supervision that combines endpoint matching with velocity field alignment, providing complementary learning signals that improve training stability and convergence.
    \item We develop a velocity matching formulation inspired by MeanFlow's average velocity, achieving dense trajectory supervision without the computational overhead of Jacobian computation.
    \item We introduce step-aware tokens that replace continuous-time embeddings with learnable discrete representations, substantially reducing parameters in the time-conditioning pathway while enabling step-specific specialization.
    \item We demonstrate through comprehensive experiments that DSFlow achieves teacher-level quality within a single inference step on text-to-speech benchmarks, enabling real-time synthesis with significantly reduced computational costs compared to existing methods.
\end{itemize}

\section{Related Work}
\label{sec:related}

\paragraph{Flow Matching and Efficient Inference.}
Flow Matching~\cite{lipman2023flow, liu2023flow} has emerged as an efficient alternative to diffusion-based generative modeling by enabling simulation-free training of continuous normalizing flows. Owing to its stability and strong empirical performance, flow matching has been successfully adopted in conditional generation tasks, including text-to-speech synthesis~\cite{mehta2024matcha, du2024cosyvoice}. Despite these advantages, high-quality synthesis with flow matching typically relies on numerical integration with multiple function evaluations at inference time. As the number of integration steps directly impacts latency and computational cost, especially in real-time or large-scale deployment scenarios, improving inference efficiency remains a central challenge. This has motivated growing interest in distillation and acceleration techniques that aim to reduce the number of integration steps while preserving synthesis quality.

\paragraph{Distillation Methods for Flow-Based Models.}
A widely explored strategy for accelerating flow-based models is endpoint distillation, where a student model is trained to reproduce the final output of a multi-step teacher using significantly fewer integration steps. Progressive distillation~\cite{salimans2022progressive} iteratively reduces the number of steps through endpoint supervision and has shown promising results in diffusion and flow-based models. However, because supervision is applied only at the trajectory endpoint, small prediction errors can accumulate across integration steps, often leading to high training variance and unstable convergence. To mitigate this issue, MeanFlow~\cite{geng2025mean} proposes supervising the student by matching the teacher’s mean velocity field, thereby providing denser, process-level supervision that reduces variance. This approach, however, relies on Jacobian-vector product computation, which substantially increases training cost. IntMeanFlow~\cite{intmeanflow2025} avoids explicit Jacobian computation by approximating velocity integrals, but may still suffer from endpoint drift when distilled to very few steps. Collectively, existing distillation methods tend to favor either endpoint accuracy or trajectory-level supervision, introducing trade-offs among training stability, computational overhead, and final synthesis quality.

\paragraph{Controllability in Distilled Generative Flow-Matching Models.}
Classifier-free guidance (CFG)~\cite{ho2022cfg} is a standard mechanism for improving sample quality and controllability in diffusion and flow-based generative models by interpolating between conditional and unconditional predictions:
\begin{equation}
v_{\text{cfg}}(x_t, t, c, w)
=
(1-w)\, v_\theta(x_t, t, \emptyset)
+
w\, v_\theta(x_t, t, c),
\label{eq:cfg}
\end{equation}
where the guidance scale $w$ controls the trade-off between fidelity and diversity. Conventional CFG training explicitly allocates model capacity to both conditional and unconditional branches, typically by dropping conditioning information for a subset of training samples. In distillation settings, however, teacher targets are often generated with guidance enabled, leading the student to implicitly absorb the effect of CFG during training. As a result, prior work commonly assumes that inference-time CFG is unnecessary or ineffective for distilled models. Recent observations suggest that limited controllability may still be beneficial if the unconditional branch remains well-formed. Our work builds on this insight by introducing a lightweight regularization strategy that preserves unconditional predictions during distillation, enabling weak inference-time CFG for quality adjustment without compromising student–teacher alignment.

\paragraph{Architectural Design for Distilled Models.}
Most flow-matching architectures adopt modulation-based conditioning mechanisms, such as adaLN-Zero~\cite{peebles2023dit}, which inject conditioning signals into each Transformer layer and are designed to model continuous-time dynamics with high expressive capacity. While effective for continuous-time flow modeling, such designs may be mismatched to distilled settings where inference is restricted to a small, discrete set of steps. In contrast, token-based conditioning mechanisms, which encode conditioning information directly as learnable tokens, have proven effective in multimodal and unified architectures~\cite{radford2021clip, bao2023uvit}. These approaches offer a more compact and flexible alternative by leveraging self-attention to integrate conditioning signals. Despite their success in other domains, architectural adaptation for discrete-step distilled flow models has received limited attention. Our work addresses this gap by introducing step-aware tokenization, which aligns architectural capacity with the reduced complexity of few-step inference.

\section{Method}
\label{sec:method}

\begin{figure*}[h]
\centering
\includegraphics[width= \textwidth]{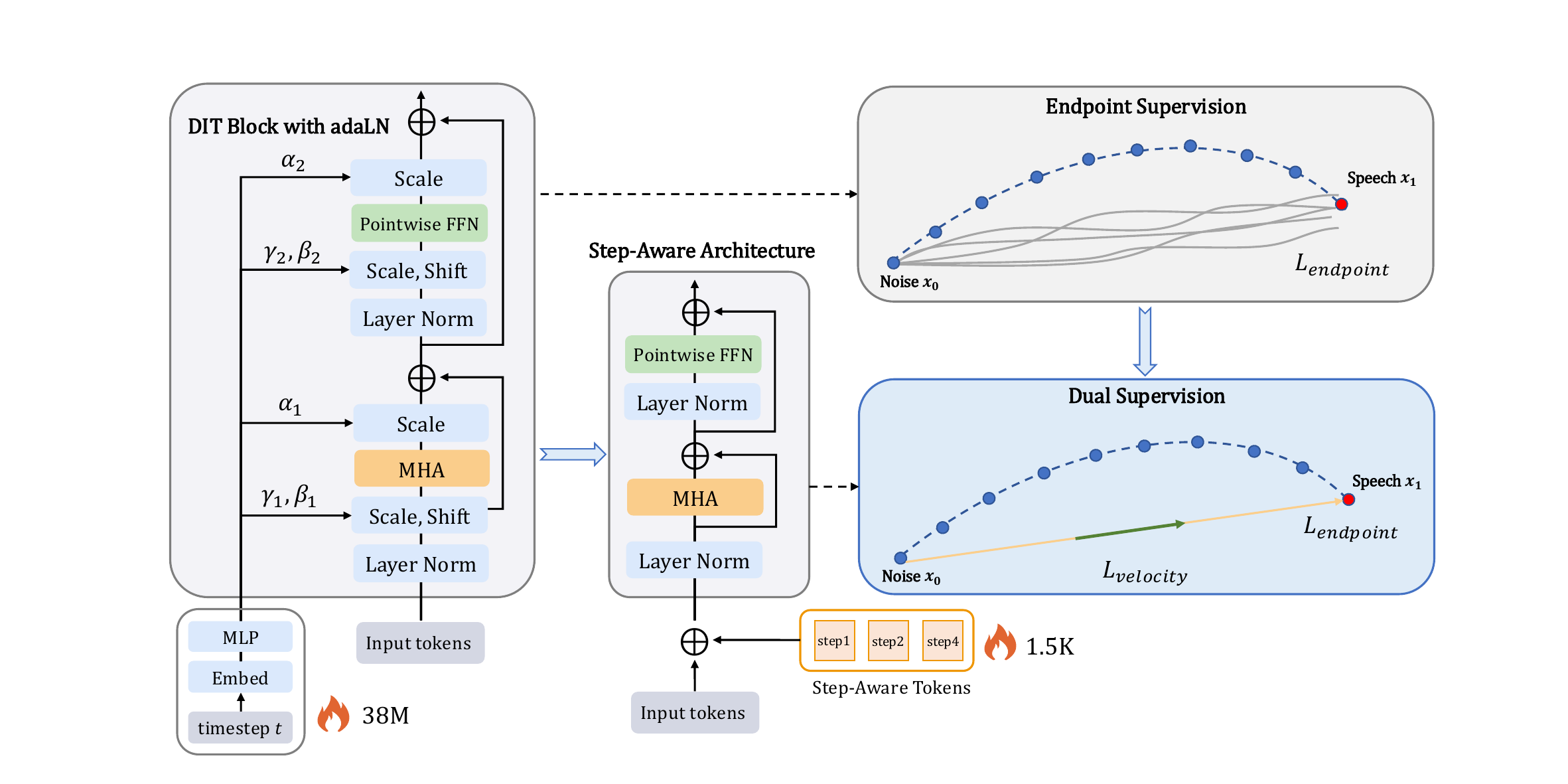}
\caption{Overview of the DSFlow framework. The left and center panels show the transition from a DiT block with adaLN-Zero conditioning to the proposed step-aware architecture, where the heavy time-modulation network is replaced by lightweight step-aware tokens. The right panel illustrates dual supervision, which combines endpoint matching ($\mathcal{L}_{\mathrm{endpoint}}$) with deterministic mean velocity alignment ($\mathcal{L}_{\mathrm{velocity}}$) to guide the student along the teacher’s mean trajectory (green vectors), improving process consistency over endpoint-only distillation without additional Jacobian computation.}
\label{fig:overview}
\end{figure*}

\subsection{Problem Formulation}

We adopt the flow matching framework~\cite{lipman2023flow,diffusion,song2019score,ho2020denoising,scoresde,nichol2021improved,rombach2021high}, which learns a time-dependent velocity field $v_\theta : \mathbb{R}^D \times [0,1] \times \mathcal{C} \rightarrow \mathbb{R}^D$ parameterized by $\theta$. The velocity field defines the ordinary differential equation:
\begin{equation}
\frac{d x_t}{d t} = v_\theta(x_t, t, c), \quad t \in [0,1],
\end{equation}
where $x_t \in \mathbb{R}^D$ denotes the intermediate state along the flow trajectory, $t$ is the continuous flow time, and $c \in \mathcal{C}$ represents conditioning information (e.g., text or prompt features).

\paragraph{Endpoint Distillation.}
A commonly used distillation objective trains a student model to match the final output of a pretrained teacher. Formally, the endpoint distillation loss is defined as
\begin{equation}
\begin{aligned}
\mathcal{L}_{\mathrm{endpoint}}
&=
\mathbb{E}_{x_0, c}
\Big[
\big\|
\mathrm{ODE}_K(v_{\mathcal{S}}, x_0, c)
\\
&\qquad\qquad
-
\mathrm{ODE}_N(v_{\mathcal{T}}, x_0, c)
\big\|^2
\Big],
\end{aligned}
\label{eq:endpoint_loss}
\end{equation}
where $x_0 \in \mathbb{R}^D$ is a sample from the target data distribution, $v_{\mathcal{T}}$ and $v_{\mathcal{S}}$ denote the teacher and student velocity fields, and $\mathrm{ODE}_N(\cdot)$ and $\mathrm{ODE}_K(\cdot)$ denote numerical ODE solvers with $N$ and $K$ integration steps, respectively. This formulation does not impose constraints on intermediate states and may lead to high-variance training due to error accumulation.

In this work, the teacher velocity field $v_{\mathcal{T}}$ is trained with $N = 10$ inference steps, while the student velocity field $v_{\mathcal{S}}$ is trained to operate with $K \in \{1, 2, 4\}$ steps. The corresponding model parameters are denoted by $\theta_{\mathcal{T}}$ and $\theta_{\mathcal{S}}$. Step counts are chosen as powers of two to facilitate progressive distillation and stable time discretization. Both teacher and student models take $(x_t, t, c)$ as input. The objective is to learn a compact student model that achieves perceptual quality comparable to the teacher while substantially reducing inference cost, characterized by $K \ll N$ and $|\theta_{\mathcal{S}}| < |\theta_{\mathcal{T}}|$.

\subsection{Dual Supervision for Process Alignment}

Conventional flow distillation typically relies on endpoint matching (Eq.~\ref{eq:endpoint_loss}), which supervises only the final state of the integration trajectory. This approach can exhibit compromised training stability, as small prediction errors may accumulate during ODE integration. Alternatively, velocity field supervision~\cite{geng2025mean} provides denser trajectory-level constraints but requires Jacobian-vector product (JVP) computation and may introduce endpoint bias when used in isolation. We therefore combine endpoint matching and velocity alignment to leverage their complementary strengths: endpoint supervision anchors the final solution, while velocity supervision provides intermediate process guidance.

\paragraph{Mean Velocity Computation.}
For each integration interval $[t_{\mathrm{start}}, t_{\mathrm{end}}]$, we compute a deterministic estimate of the teacher’s mean velocity:
\begin{equation}
\bar{v}_{\mathcal{T}}(x_{t_{\mathrm{mf}}}, t_{\mathrm{mf}}, c)
=
\frac{
x_{\mathcal{T}}(t_{\mathrm{end}})
-
x_{\mathcal{T}}(t_{\mathrm{start}})
}{
t_{\mathrm{end}} - t_{\mathrm{start}}
},
\label{eq:mean_velocity}
\end{equation}
where $t_{\mathrm{mf}} = (t_{\mathrm{start}} + t_{\mathrm{end}})/2$ denotes the midpoint of the interval. The states $x_{\mathcal{T}}(t_{\mathrm{start}})$ and $x_{\mathcal{T}}(t_{\mathrm{end}})$ are obtained via the teacher’s ODE solver. The intermediate evaluation point is approximated by linear interpolation:
\begin{equation}
x_{t_{\mathrm{mf}}}
=
(1 - t_{\mathrm{mf}})\, x_{\mathcal{T}}(t_{\mathrm{start}})
+
t_{\mathrm{mf}}\, x_{\mathcal{T}}(t_{\mathrm{end}}).
\end{equation}
This construction yields a stable velocity target while avoiding the additional computational overhead associated with JVP-based methods~\cite{geng2025mean}.

\paragraph{Dual Supervision Loss.}
We define the dual supervision objective as a weighted combination of endpoint matching and velocity alignment:
\begin{equation}
\begin{aligned}
\mathcal{L}_{\mathrm{dual}}
&=
\alpha \,
\mathbb{E}\!\left[
\left\|
x_{\mathcal{S}}(t_{\mathrm{end}})
-
x_{\mathcal{T}}(t_{\mathrm{end}})
\right\|^2
\right]
\\
&\quad +
(1-\alpha)\,
\mathbb{E}\!\left[
\left\|
v_{\mathcal{S}}(x_{t_{\mathrm{mf}}}, t_{\mathrm{mf}}, c)
-
\bar{v}_{\mathcal{T}}
\right\|^2
\right],
\end{aligned}
\label{eq:dual_loss}
\end{equation}
where $\alpha$ controls the relative contribution of the two terms. For $K$-step distillation, the loss is averaged across all subintervals:
\begin{equation}
\mathcal{L}_{\mathrm{Stage1}}
=
\frac{1}{K}
\sum_{k=1}^{K}
\left[
\alpha\, \mathcal{L}^{(k)}_{\mathrm{endpoint}}
+
(1-\alpha)\, \mathcal{L}^{(k)}_{\mathrm{velocity}}
\right].
\label{eq:stage1_full_loss}
\end{equation}
We set $\alpha = 0.7$ to emphasize endpoint accuracy while retaining substantial intermediate supervision. Empirical stability improvements are analyzed in Section~\ref{sec:ablation}.

\subsection{Step-Aware Tokenization for Structural Adaptation}

\subsubsection{Information-Theoretic Motivation}

Distillation from continuous-time flow matching to few-step inference fundamentally alters the conditioning space of the model. In standard flow matching, the velocity field is conditioned on a continuous time variable $t \in [0,1]$, corresponding to an uncountable input domain. After distillation, however, the student model is required to operate only on a discrete set of inference steps $n \in \{1, 2, 4\}$. Under a uniform prior, the entropy of this discrete conditioning variable is
\begin{equation}
H(n)
=
-\sum_{n \in \{1,2,4\}} p(n)\log_2 p(n)
=
\log_2(3)
\approx 1.58 \text{ bits}.
\end{equation}
This substantial reduction in conditioning complexity suggests that architectural components designed to model continuous-time variation may be unnecessary in the distilled setting. Instead, a finite and compact representation of step information may suffice.

\subsubsection{Redundancy of Continuous-Time Modulation}

AdaLN-Zero~\cite{peebles2023dit} conditions each Transformer layer via scale, shift, and gating parameters produced by a multilayer perceptron from continuous-time embeddings. This design is well suited for modeling velocity fields conditioned on a continuous flow time variable, as it provides substantial expressive capacity to capture fine-grained temporal variation.  

However, after distillation to few-step inference, the conditioning space is reduced to a small set of discrete step values. In this setting, the continuous-time modulation mechanism becomes unnecessarily expressive, as the model no longer needs to represent smooth variation over time. Consequently, a large portion of the modulation capacity is underutilized, motivating a more compact conditioning strategy tailored to discrete inference steps. A detailed quantitative analysis of this redundancy is provided in Appendix~\ref{sec:proof_step_aware}.

\subsubsection{Step-Aware Token Conditioning}

To better align architectural capacity with the distilled task, we replace adaLN-Zero with explicit step-aware tokenization. Each inference step $n \in \{1, 2, 4\}$ is associated with a small set of learnable tokens that are prepended to the input sequence. Through self-attention, these tokens provide step-specific conditioning without explicit per-layer modulation. In our implementation, this design reduces the number of step-related parameters from 38M to 1.5K.

\begin{proposition}[Parameter Complexity of Step-Aware Conditioning]
\label{prop:complexity_bound}
For a model handling $K$ discrete inference steps with hidden dimension $D$ and $L$ Transformer layers, step-aware token conditioning introduces $O(KD)$ parameters, whereas adaLN-style conditioning introduces $O(LD^2)$ parameters.
\end{proposition}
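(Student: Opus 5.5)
The plan is a direct parameter count for each conditioning scheme under a fixed architectural model, followed by an asymptotic comparison. First I fix what counts as a ``step-related'' parameter: anything whose only role is to inject the step/time variable into the network. Shared backbone weights (attention, feed-forward) are excluded, since they are present in both designs.

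\textbf{Step-aware tokens.} Each inference step $n$ in the admissible set of size $K$ is assigned $m$ learnable tokens in $\mathbb{R}^D$, with $m$ a fixed constant independent of $D$, $L$ and $K$. These tokens are prepended to the sequence and processed by the existing self-attention layers, which add no new weights. The count is therefore exactly $mKD = O(KD)$. If the tokens are used at every layer through the shared residual stream, no per-layer copy is needed. If one instead chooses per-layer tokens, the count becomes $O(LKD)$, so I will state explicitly that the tokens are injected once at the input. That is the design described above.

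\textbf{adaLN-style conditioning.} I follow the adaLN-Zero design of \cite{peebles2023dit}, taking the standard form.
\begin{itemize}
\item A timestep embedding MLP maps a sinusoidal feature of dimension $D_f=O(D)$ to $\mathbb{R}^D$. This uses two linear maps and costs $O(D^2)$.
\item In each of the $L$ blocks, a linear map $\mathbb{R}^D\to\mathbb{R}^{6D}$ produces the shift, scale and gate for the attention and MLP branches. This costs $6D^2+6D$ per block.
\end{itemize}
Summing gives $L(6D^2+6D)+O(D^2)=O(LD^2)$.

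The count is also tight from below: it is $\Theta(LD^2)$ whenever each block has its own modulation projection, which is the defining feature of adaLN-Zero.

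\textbf{Main obstacle and scope.} The main difficulty is not the arithmetic but making the comparison well defined.
\begin{itemize}
\item \emph{Token count.} The bound requires $m=O(1)$. I will state this as an assumption, and it matches the reported 1.5K parameters, e.g.\ $K=3$, $m=1$, $D=512$.
\item \emph{Shared modulation.} One must exclude variants in which modulation weights are shared across layers; I will restrict the adaLN count to the per-layer variant.
\end{itemize}
Finally, I will remark that $KD\ll LD^2$ whenever $K\ll LD$, which holds trivially for $K\le 4$. This explains the reported reduction from 38M to 1.5K parameters.
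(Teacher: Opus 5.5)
Your proposal is correct and takes essentially the same route as the paper's proof: a direct count of a $K\times D$ step-embedding table against a per-layer $\Theta(D^2)$ modulation network summed over $L$ layers. The differences are only bookkeeping, and both accountings give the stated bounds.
\begin{itemize}
\item The paper models each layer's adaLN as a two-layer $D\to D\to 3D$ MLP, i.e.\ $4D^2$ per layer and $4LD^2$ in total, and uses one token per step.
\item You use DiT's single $D\to 6D$ projection per block plus a shared timestep MLP, and allow $m=O(1)$ tokens per step.
\end{itemize}
Neither count reproduces the 38M figure for $L=16$, $D=512$: yours gives about 25.6M, the paper's about 16.8M. Your closing remark about explaining the reduction from 38M to 1.5K should therefore be read as order-of-magnitude only.
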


The proof is provided in Appendix~\ref{sec:proof_step_aware}. This analysis motivates the removal of adaLN-Zero from the student architecture. Empirically, step-aware tokenization yields substantial parameter savings and improved inference efficiency, while maintaining or improving synthesis quality, indicating that aligning model capacity with task complexity is beneficial in distilled flow models.

\subsection{Weak CFG Regularization}

In standard distillation, classifier-free guidance (CFG) is implicitly transferred to the student by generating teacher targets with guidance enabled, and inference is typically performed without additional CFG. In our setting, the student indeed internalizes the teacher’s guided behavior through distillation. However, we empirically find that introducing a small amount of CFG at inference time remains beneficial, providing lightweight quality adjustment even after guidance has been distilled.

To retain this capability, the student must maintain a valid unconditional branch during training. Without explicit constraints, this branch may collapse, making inference-time CFG ineffective. We therefore introduce a weak regularization term that encourages consistency between conditional and unconditional predictions:
\begin{equation}
\begin{aligned}
\mathcal{L}_{\mathrm{CFG}}
&=
\lambda \,
\mathbb{E}_{x_t, t, c}
\Big[
\big\|
v_\theta(x_t, t, \emptyset)
-
\mathrm{sg}\!\left(
v_\theta(x_t, t, c)
\right)
\big\|^2
\Big],
\end{aligned}
\label{eq:cfg_regularization}
\end{equation}
where $\mathrm{sg}(\cdot)$ denotes the stop-gradient operator and $\lambda$ is set to a small value.

This regularization preserves the structural validity of the unconditional branch without interfering with distillation, enabling weak inference-time CFG ($w \in [0, 0.1]$). 

\subsection{Training Procedure}
\label{sec:training_procedure}

Algorithm~\ref{alg:training} summarizes the training procedure of DSFlow. 
Given a pretrained teacher velocity field $v_{\mathcal{T}}$ and a training dataset $\mathcal{D}$, the objective is to learn a student velocity field $v_{\mathcal{S}}$ that supports efficient inference with a small number of steps.
Each training sample $(x_0, x_1, c) \sim \mathcal{D}$ consists of a clean target $x_0$, its flow-matching counterpart $x_1$, and conditioning information $c$. 
At each iteration, a target inference step count $n_k$ is sampled and provided to the student via a learned step-aware embedding $\mathbf{e}_{n_k}$.
Dual supervision is applied over the $n_k$ discretization intervals. 
For each interval $i$, the teacher mean velocity
\begin{equation}
\bar{v}^{(i)}_{\mathcal{T}} =
\big(x^{(i)}_{\mathcal{T},\mathrm{end}} - x^{(i)}_{\mathcal{T},\mathrm{start}}\big) / \Delta t    
\end{equation}
is computed, and the student predicts the corresponding velocity at the flow-matching state $(x_{t_{\mathrm{mf}}}, t_{\mathrm{mf}})$.
The dual supervision loss combines endpoint and velocity matching terms, weighted by $\alpha$, and is averaged across intervals.

In addition, a weak classifier-free guidance regularization term is applied by penalizing the discrepancy between conditional and unconditional student predictions, where $\emptyset$ denotes dropped conditioning and $\mathrm{sg}(\cdot)$ is the stop-gradient operator. 
The total loss is minimized with respect to $\theta_{\mathcal{S}}$ using gradient descent with learning rate $\eta$.

\begin{algorithm}[!htbp]
\caption{DSFlow Training Procedure}
\label{alg:training}
\begin{algorithmic}[1]
\STATE \textbf{Input:} Teacher velocity field $v_{\mathcal{T}}$, training dataset $\mathcal{D}$, target inference steps $\{n_k\}_{k=1}^K$
\STATE \textbf{Output:} Student velocity field $v_{\mathcal{S}}$
\STATE Initialize student model parameters $\theta_{\mathcal{S}}$ with step-aware tokens $\mathbf{e}$
\FOR{iteration $= 1, 2, \ldots, N_{\mathrm{iters}}$}
    \STATE Sample $(x_0, x_1, c) \sim \mathcal{D}$ and a target step count $n_k$
    \STATE Initialize $\mathcal{L}_{\mathrm{dual}} \gets 0$
    \FOR{interval $i = 1, \ldots, n_k$}
        \STATE $\bar{v}^{(i)}_{\mathcal{T}} \gets \big(x^{(i)}_{\mathcal{T},\mathrm{end}} - x^{(i)}_{\mathcal{T},\mathrm{start}}\big) / \Delta t$
        
        \STATE $v^{(i)}_{\mathcal{S}} \gets v_{\mathcal{S}}\!\left(x_{t_{\mathrm{mf}}},t_{\mathrm{mf}},c;\,\mathbf{e}_{n_k}\right)$
        
        \STATE
        $\mathcal{L}_{\mathrm{dual}} \gets \mathcal{L}_{\mathrm{dual}}
        + \alpha \, \mathcal{L}^{(i)}_{\mathrm{endpoint}}
        + (1 - \alpha) \, \mathcal{L}^{(i)}_{\mathrm{velocity}}
        $
    \ENDFOR
    \[
    \begin{aligned}
    \mathcal{L}_{\mathrm{cfg}} \gets\;
    \lambda \Big\|
    & v_{\mathcal{S}}(x_t, t, \emptyset; \mathbf{e}_{n_k}) \\
    & -\; \mathrm{sg}\!\big(
        v_{\mathcal{S}}(x_t, t, c; \mathbf{e}_{n_k})
      \big)
    \Big\|^2
    \end{aligned}
    \]
    \STATE $
    \theta_{\mathcal{S}} \gets
    \theta_{\mathcal{S}} - \eta \nabla_{\theta_{\mathcal{S}}}
    \big( \mathcal{L}_{\mathrm{dual}} + \mathcal{L}_{\mathrm{cfg}} \big)
    $
\ENDFOR
\STATE \textbf{return} $v_{\mathcal{S}}$
\end{algorithmic}
\end{algorithm}

\begin{table*}[!htbp]
\centering
\small
\caption{Comprehensive evaluation on LibriSpeech test-clean. All models are trained on the Emilia corpus (95K hours). SMOS denotes the average of voice and style similarity (SMOS-V and SMOS-S), and SIM-o denotes objective speaker similarity. Results include multi-step flow-matching reference models and their corresponding one-step distilled student counterparts across different architectures. }
\label{tab:main_results}
\begin{tabular}{l|cc|ccc|ccc}
\toprule
\textbf{Method} & \textbf{Steps} & \textbf{Params} & \textbf{MOS-N}$\uparrow$ & \textbf{MOS-Q}$\uparrow$ & \textbf{SMOS}$\uparrow$ & \textbf{SIM-o}$\uparrow$ & \textbf{WER(\%)}$\downarrow$ & \textbf{RTF}$\downarrow$ \\
\midrule
Ground Truth & - & - & 4.49$\pm$0.03 & 4.46$\pm$0.03 & 3.84$\pm$0.05 & 0.69 & 2.1 & - \\
\midrule
\multicolumn{9}{l}{\textbf{Flow Matching Teachers}} \\
StepTTS & 10 & 154M & 4.43$\pm$0.06 & 4.39$\pm$0.03 & 4.42$\pm$0.04 & 0.66 & 2.8 & 0.303 \\
CosyVoice2 & 10 & 150M & 4.41$\pm$0.06 & 4.38$\pm$0.03 & 4.29$\pm$0.04 & 0.64 & 2.9 & 0.310 \\
F5-TTS & 32 & 200M & 4.38$\pm$0.06 & 4.25$\pm$0.03 & 4.21$\pm$0.05 & 0.62 & 3.2 & 0.25 \\
E2-TTS & 32 & 150M & 4.31$\pm$0.06 & 4.17$\pm$0.03 & 4.15$\pm$0.05 & 0.61 & 3.3 & 0.28 \\
\midrule
\multicolumn{9}{l}{\textbf{StepTTS 1-Step Distillation}} \\
Endpoint Distillation & 1 & 154M & 3.56$\pm$0.08 & 3.42$\pm$0.08 & 4.01$\pm$0.07 & 0.52 & 4.8 & 0.015 \\
Progressive Distillation & 1 & 154M & 3.92$\pm$0.08 & 3.68$\pm$0.08 & 4.18$\pm$0.07 & 0.55 & 3.6 & 0.015 \\
IntMeanFlow & 1 & 154M & 4.10$\pm$0.07 & 3.96$\pm$0.07 & 4.23$\pm$0.06 & 0.63 & 3.4 & 0.018 \\
DSFlow (StepTTS Student)  & 1 & 118M & 4.32$\pm$0.06 & 4.29$\pm$0.06 & 4.27$\pm$0.05 & 0.66 & 3.1 & 0.012 \\

\midrule
\multicolumn{9}{l}{\textbf{StepTTS Multi-Step Students}} \\
DSFlow (2-step student)  & 2 & 118M & 4.38$\pm$0.05 & 4.35$\pm$0.05 & 4.31$\pm$0.04 & 0.68 & 3.0 & 0.018 \\
DSFlow (4-step student)  & 4 & 118M & 4.41$\pm$0.05 & 4.35$\pm$0.05 & 4.32$\pm$0.04 & 0.68 & 2.9 & 0.030 \\
\midrule
\multicolumn{9}{l}{\textbf{Other 1-Step Students (Cross-Architecture Comparison)}} \\

DSFlow (F5-TTS Student) 
& 1 & 118M 
& 4.15$\pm$0.06 
& 4.02$\pm$0.06 
& 4.01$\pm$0.05 
& 0.61 
& 3.4 
& 0.015 \\

DSFlow (CosyVoice2 Student) 
& 1 & 150M 
& 4.23$\pm$0.06 
& 4.21$\pm$0.06 
& 4.18$\pm$0.05 
& 0.63 
& 3.1 
& 0.018 \\

DSFlow (E2-TTS Student) 
& 1 & 150M 
& 4.11$\pm$0.06 
& 4.03$\pm$0.06 
& 3.95$\pm$0.05 
& 0.58 
& 3.5 
& 0.020 \\

\midrule
\multicolumn{9}{l}{\textbf{Other Fast TTS Paradigms}} \\
VITS & 1 & 120M & 4.05$\pm$0.07 & 4.08$\pm$0.07 & 3.49$\pm$0.06 & 0.56 & 3.2 & 0.015 \\
\bottomrule
\end{tabular}
\end{table*}

\section{Experiments}
\label{sec:experiments}

\subsection{Experimental Setup}

\subsubsection{Datasets}

We train our models on the in-the-wild multilingual speech corpus Emilia~\cite{he2024emilia}. After removing utterances with transcription errors or incorrect language labels, the resulting training set comprises approximately 95k hours of English and Mandarin speech from over 9,400 speakers, covering a wide range of prosodic characteristics and acoustic conditions typical of real-world applications. Evaluation is conducted on three held-out test sets. LibriSpeech test-clean~\cite{panayotov2015librispeech} serves as a controlled English benchmark with 2,620 high-quality audiobook utterances from 40 speakers, enabling comparison with prior work. To assess robustness under diverse accents and recording conditions, we additionally use Seed-TTS test-en~\cite{anastassiou2024seedtts}, which contains 1,088 English utterances drawn from Common Voice~\cite{ardila2020common}. Cross-lingual generalization is evaluated on Seed-TTS test-zh~\cite{anastassiou2024seedtts}, consisting of 2,020 Mandarin utterances from DiDiSpeech~\cite{guo2021didispeech}. All audio is resampled to 16~kHz and converted to 80-dimensional mel-spectrograms using a 1024-point FFT with a 256-sample hop size. Text inputs are processed with language-specific phonemization (G2P for English and pypinyin for Mandarin), followed by WordPiece tokenization with a 10k vocabulary.

\subsubsection{Model Configurations}

\paragraph{Teacher Model.}
The teacher model is a flow matching–based generator instantiated with a DiT-style Transformer architecture comprising 154M parameters. It consists of 16 Transformer layers with a hidden dimension of 512 and 8 attention heads (head dimension 64). Timestep conditioning is implemented via adaLN-Zero, and inference is performed using an Euler ODE solver with a cosine time schedule. 

\paragraph{Student Model.}
The distilled student model contains 118M parameters, corresponding to a 24\% reduction relative to the teacher. This reduction is primarily achieved by removing the adaLN-Zero modules, which account for approximately 38M parameters, while retaining the same Transformer backbone with 16 layers and a hidden dimension of 512. To encode step information, we introduce three step-aware tokens, adding only 1.5k parameters.

\paragraph{Training Details.}
All models are optimized using AdamW~\cite{loshchilov2019adamw} with an initial learning rate of $5 \times 10^{-4}$ and a cosine annealing schedule. We use a batch size of 32 and FP16 mixed-precision training. Training is conducted for 200k steps on 8 NVIDIA A100 GPUs (40GB), requiring approximately 72 hours. For dual supervision, a balance coefficient of $\alpha = 0.7$ is applied, and the teacher’s mean velocity is computed deterministically from multi-step ODE trajectories. Additional implementation details, including full hyperparameter settings and hardware configurations, are provided in Appendix~\ref{app:experimental_details}.

\paragraph{CFG Configuration.}
The teacher model performs inference with a classifier-free guidance weight of $w = 0.7$ using the interpolation formulation in Eq.~\ref{eq:cfg}. During distillation, all teacher targets are generated with the same setting, such that the student is explicitly trained to reproduce the teacher’s behavior at $w = 0.7$ and thereby internalize the guidance effect. As a result, the student’s optimal inference configuration differs from that of the teacher and corresponds to a weak guidance regime ($w = 0.05$).

\begin{table*}[h]
\centering
\small
\caption{Ablation study with progressive addition of components. Dual Supervision and Step-Aware Tokens address fundamental process and structural misalignments, while Weak CFG enables inference-time quality adjustment.}
\label{tab:ablation}
\begin{tabular}{l|cccc|ccc}
\toprule
\textbf{Variant} & \textbf{Dual Sup.} & \textbf{Weak CFG} & \textbf{Step Token} & \textbf{adaLN} & \textbf{MOS-N}$\uparrow$ & \textbf{MOS-Q}$\uparrow$ & \textbf{Params} \\
\midrule
Endpoint Distillation (baseline) & \xmark & \xmark & \xmark & \cmark & 3.56 & 3.42 & 154M \\
+ Dual Supervision & \cmark & \xmark & \xmark & \cmark & 4.21 & 4.11 & 154M \\
+ Weak CFG & \cmark & \cmark & \xmark & \cmark & 4.25 & 4.18 & 154M \\
+ Step-Aware Token & \cmark & \cmark & \cmark & \xmark & 4.32 & 4.29 & 118M \\
\bottomrule
\end{tabular}
\end{table*}

\subsubsection{Baselines}

We compare our method against both multi-step reference systems and representative academic approaches for one-step distillation. As multi-step baselines, we include two 10-step flow-matching text-to-speech systems. StepTTS~\cite{stepaudio2024}, with 154M parameters, serves as the teacher model used for distillation, while CosyVoice2~\cite{du2024cosyvoice}, a 150M-parameter system, provides an independent reference for high-quality multi-step generation as well as E2-TTS~\cite{eskimez2024e2}.

For one-step inference, we evaluate three distillation baselines. \emph{Endpoint Distillation} applies a direct mean-squared-error loss on endpoint predictions (Eq.~\ref{eq:endpoint_loss}). \emph{IntMeanFlow}~\cite{intmeanflow2025} distills the teacher by supervising the student with averaged velocity fields.

\subsubsection{Evaluation Metrics}

We adopt a comprehensive evaluation protocol encompassing subjective listening tests, objective quality metrics, prosodic feature analysis, and computational efficiency. Following DMOSpeech~\cite{dmospeech2024}, we conduct a multi-dimensional mean opinion score (MOS) evaluation with 501 native English speakers recruited via Prolific. Each rater evaluates 30 samples, yielding 1,500 ratings over 50 test utterances, and scores four perceptual dimensions on a 1–5 scale: naturalness (MOS-N), sound quality (MOS-Q), voice similarity (SMOS-V), and style similarity (SMOS-S). Rater reliability is enforced using a two-stage validation procedure that includes mismatched-speaker and identical-sample anchor tests; responses from raters who fail these controls are excluded. Approximately 30\% of submissions are filtered out.
Additional details on participant recruitment, validation criteria, and interface design are provided in Appendix~\ref{sec:subjective_eval_details}.

For objective evaluation, speaker similarity (SIM-o) is computed as the cosine similarity between WavLM embeddings~\cite{chen2022wavlm} of the synthesized speech and the reference prompt, while intelligibility is assessed using word error rate (WER) obtained by transcribing generated audio with Whisper-large-v3~\cite{radford2023whisper}. To quantitatively assess the ability of dual supervision to preserve prosodic attributes, we measure correlation coefficients between synthesized speech and the corresponding prompt for a set of acoustic features, including pitch mean and standard deviation, energy standard deviation, harmonics-to-noise ratio (HNR), jitter, and shimmer. These features are extracted using Librosa and Parselmouth~\cite{dmospeech2024}, with higher correlation values indicating better preservation of prosodic characteristics.

Computational efficiency is evaluated by reporting real-time factor (RTF) measured on an NVIDIA A100 GPU, together with the total number of trainable parameters and the number of inference steps.

\subsection{Main Results}
We compare our approach with three representative flow matching distillation methods, namely Progressive Distillation~\cite{salimans2022progressive}, IntMeanFlow~\cite{intmeanflow2025}, which constitute strong and commonly adopted approaches for one step flow distillation. For additional context, we also report results from alternative fast text to speech paradigms such as VITS~\cite{VITS-Kim2021}.

Table~\ref{tab:main_results} presents the results on LibriSpeech test clean. The proposed one step DSFlow consistently outperforms all distillation baselines across subjective, objective, and efficiency metrics. In particular, DSFlow markedly reduces the performance gap to the multi step teacher in terms of naturalness and sound quality, while using fewer parameters and achieving lower inference latency. Speaker similarity is well preserved, as confirmed by both subjective ratings and objective similarity scores, with the latter matching the teacher performance. In addition, intelligibility measured by word error rate improves relative to the teacher, suggesting that the distillation process can lead to more stable acoustic realizations. Finally, multi step variants of \textsc{DSFlow} further reduce the remaining quality gap, indicating that the proposed framework generalizes across different inference budgets.

\begin{table*}[!h]
\centering
\small
\caption{Prosodic correlation with prompt on LibriSpeech test-clean. Higher values indicate better prosodic preservation.}
\label{tab:prosodic}
\begin{tabular}{l|c|cccc}
\toprule
\textbf{Model} & \textbf{Steps} & \textbf{Pitch Mean}$\uparrow$ & \textbf{HNR}$\uparrow$ & \textbf{Jitter}$\uparrow$ & \textbf{Shimmer}$\uparrow$ \\
\midrule
Teacher (10-step) & 10 & 0.88 & 0.75 & 0.69 & 0.59 \\
DSFlow (1-step) & 1 & 0.90 & 0.74 & 0.66 & 0.61 \\
Endpoint Distillation & 1 & 0.74 & 0.61 & 0.51 & 0.42 \\
\bottomrule
\end{tabular}
\end{table*}

\subsection{Cross-Architecture Generalization}
To evaluate whether the proposed distillation method generalizes across different network architectures, we apply DSFlow to four representative text-to-speech models: StepTTS (DiT with adaLN-Zero), F5-TTS (DiT with RoPE), CosyVoice2 (U-Net without adaLN), and E2-TTS (a U-Net variant). Notably, CosyVoice2 and E2-TTS do not employ adaLN-Zero, which renders step-aware tokenization inapplicable. This setting allows us to assess whether dual supervision and weak classifier-free guidance alone are sufficient for effective distillation.

Results are summarized in the \emph{Cross-Architecture Comparison} block in Table~\ref{tab:main_results}. Across all architectures, the resulting 1-step models achieve competitive synthesis quality, indicating that the proposed distillation framework is not tied to a specific backbone. Among the evaluated systems, StepTTS equipped with all components attains the strongest overall performance, highlighting the benefit of combining dual supervision, step-aware tokenization, and weak guidance. Experiments on F5-TTS further demonstrate that the framework extends naturally to DiT variants with rotary positional embeddings. Importantly, CosyVoice2 and E2-TTS trained using only dual supervision and weak classifier-free guidance, without step-aware tokenization, still exhibit strong performance, suggesting that the proposed process-level alignment provides consistent benefits even when architectural adaptation is not feasible.

\subsection{Ablation Studies}
\label{sec:ablation}

Table~\ref{tab:ablation} reports the ablation results with progressively introduced components. Dual supervision yields the most pronounced gain, underscoring the importance of process-level alignment between the student and teacher during distillation. This component substantially improves performance over the baseline, suggesting effective mitigation of trajectory mismatch. Weak classifier-free guidance provides additional improvement with minimal regularization overhead, enabling inference-time quality adjustment while preserving distillation fidelity. Step-aware tokenization further improves quality despite a reduction in parameter count, supporting the principle of matching model capacity to the complexity of the distilled task. Additional ablation results, including multi-token variants, are provided in Appendix~\ref{sec:extended_ablation}.


\subsection{Prosodic Feature Preservation}
\label{sec:prosodic_preservation}
Table~\ref{tab:prosodic} reports correlation analysis between synthesized speech and the corresponding prompt audio across four representative acoustic features, assessing the effectiveness of dual supervision in preserving prosodic characteristics. The proposed one-step model achieves correlations that are comparable to, and in some cases slightly higher than, those of the teacher, indicating effective preservation of pitch-related patterns, voice quality, and amplitude stability.

In contrast, the endpoint distillation baseline exhibits consistently lower correlations across all evaluated features, suggesting substantial loss of prosodic information when supervision is limited to endpoint matching. Overall, these results indicate that dual supervision more effectively captures fine-grained prosodic attributes than endpoint-only distillation, leading to improved alignment with the prompt across a range of acoustic properties. Figure ~\ref{fig:mode_alignment_f0} contrasts the F0 patterns of the 10step teacher and our 1step student across a representative test prompt. The student’s F0 distributions closely mirror the core mode structures. This precise alignment confirms our
framework successfully inherits the teacher’s prosodic regularity while retaining natural variability. For detailed distributional alignment of core prosodic features (e.g., Log Energy), refer to Appendix~\ref{sec:prosodic_analysis}.

\begin{figure}[h]
\centering
\includegraphics[width=0.4\textwidth]{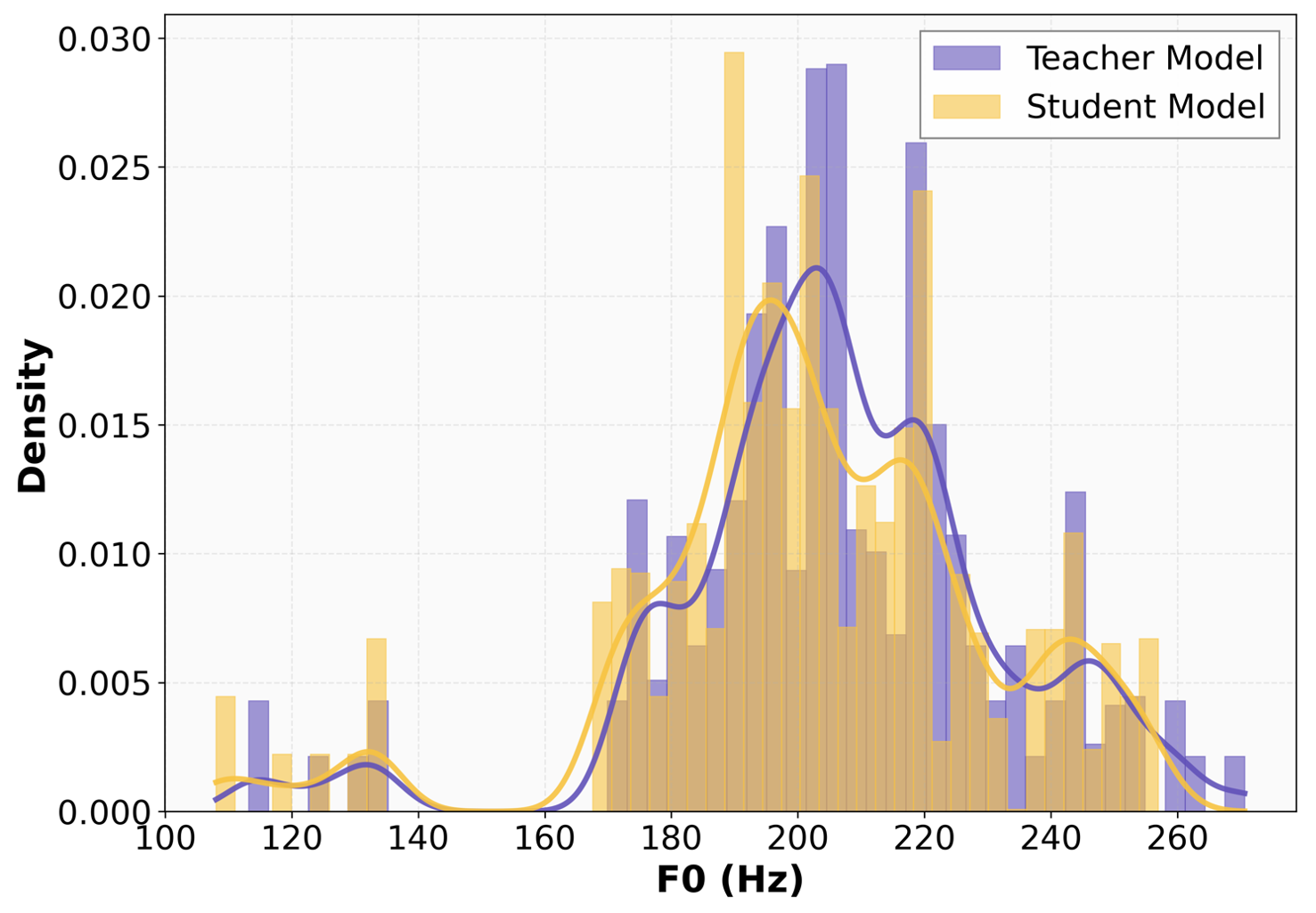}
\caption{F0 distribution comparison. For both models, 100 speech samples were generated under the same text and prompt condition, and the frame-level F0 values of the generated samples are visualized via histograms and kernel density estimates.}
\label{fig:mode_alignment_f0}
\end{figure}

\section{Conclusion}

We introduced DSFlow, a modular distillation framework for enabling efficient one-step flow matching synthesis. DSFlow combines dual supervision for process-level alignment, step-aware tokenization for architectural adaptation, and weak classifier-free guidance for inference-time control. Extensive experiments across multiple architectures and evaluation benchmarks show that the proposed approach consistently improves upon existing distillation methods and enables parameter-efficient student models to achieve performance comparable to substantially larger multi-step systems. These results highlight the importance of aligning model architecture with the reduced complexity of distilled generation tasks. Despite these advances, several challenges remain. The current framework has not been evaluated under extremely large-scale multilingual training settings, and its applicability to autoregressive architectures with distinct inductive biases warrants further investigation. Future work may explore adaptive weighting strategies for dual supervision under heterogeneous data distributions, examine the integration of step-aware mechanisms with emerging variants of flow matching, and develop differentiable evaluation objectives to support end-to-end optimization.

\newpage

\bibliography{example_paper}
\bibliographystyle{icml2026} 

\newpage
\appendix
\onecolumn

\section{Experimental Details}
\label{app:experimental_details}

This section provides comprehensive details on our experimental setup, training configuration, and evaluation methodology.

\subsection{Training Configuration}

\paragraph{Hardware.} All experiments were conducted on 8× NVIDIA A100 40GB GPUs.

\paragraph{Hyperparameters.} We use AdamW optimizer with learning rate 1e-4, batch size 32, and train for 120K steps. The dual supervision weight $\alpha=0.7$, weak CFG regularization $\lambda=0.01$, and unconditional dropout rate $p_{\text{uncond}}=0.02$.

\paragraph{Data.} We train on Emilia dataset (95K hours) and evaluate on LibriSpeech test-clean and Seed-TTS test sets.

\subsection{Subjective Evaluation Methodology}
\label{sec:subjective_eval_details}

This section provides comprehensive details of our subjective evaluation methodology, following best practices established by DMOSpeech~\cite{dmospeech2024}.

\subsubsection{Rater Recruitment and Qualification}

We recruited native English speakers from the Prolific crowdsourcing platform with the following qualification criteria:
\begin{itemize}[leftmargin=*, itemsep=1pt, parsep=0pt, topsep=1pt]
    \item Language requirement: Native English speakers with no hearing impairments
    \item Expertise requirement: Experience in content creation or audio/video editing to ensure better discrimination ability between synthetic and real audio.
    \item Platform requirements: Minimum approval rate of 95\% on previous Prolific tasks.
\end{itemize}

A total of 501 unique workers participated in the evaluation. Each worker was assigned 30 samples to evaluate, resulting in 1,500 total ratings across 50 test utterances covering all 40 speakers in the LibriSpeech test-clean subset. Compensation was set at \$15/hour, exceeding Prolific's recommended rate of \$12/hour, for an average completion time of 12 minutes per evaluation session.

\subsubsection{Evaluation Metrics Definition}

Participants rated each sample on a 1-5 scale across four dimensions:
\begin{itemize}[leftmargin=*, itemsep=1pt, parsep=0pt, topsep=1pt]
    \item MOS-N (Naturalness): How human-like the synthesized speech sounds (1 = fully synthetic, 5 = completely natural)
    \item MOS-Q (Sound Quality): Audio quality degradation relative to the prompt (1 = severe degradation, 5 = no degradation)
    \item SMOS-V (Voice Similarity): Similarity of the synthesized voice to the prompt speaker (1 = completely different, 5 = identical)
    \item SMOS-S (Style Similarity): Speaking style alignment with the prompt (1 = completely different, 5 = identical)
\end{itemize}

Additionally, participants were asked a binary question: "Is the content entirely unintelligible?" to flag completely failed generations or corrupted files. No samples were rated as "broken" by the majority of raters in our final dataset.

\subsubsection{Quality Control and Validation}

To ensure evaluation reliability, we implemented a rigorous two-tier validation system:

\paragraph{Mismatched Speaker Test:} Raters were presented with real speech samples where the reference and test sample were from different speakers (both real human speech). This serves as a low anchor for speaker similarity. Participants who rated these mismatched pairs with SMOS-V $>$ 3 were disqualified, as this indicates inability to distinguish between different speakers.

\paragraph{Identical Sample Test:} Raters evaluated identical reference-sample pairs, where both the reference and sample were the exact same recording. This serves as a high anchor across all metrics. Any participant rating these identical pairs below 4 on any of the four metrics (MOS-N, MOS-Q, SMOS-V, SMOS-S) was disqualified, as this indicates inconsistent or inattentive evaluation.

We included 4 validation tests distributed throughout each 30-sample evaluation session (approximately 13\% of total samples). Approximately 30\% of responses were excluded due to participants failing at least one validation test. This exclusion rate is consistent with prior crowdsourcing studies and ensures high-quality, reliable ratings.

\subsubsection{Evaluation Interface}

The survey interface was designed to minimize bias and ensure consistent evaluation:
\begin{itemize}[leftmargin=*, itemsep=1pt, parsep=0pt, topsep=1pt]
    \item Audio presentation: Participants were presented with a reference (prompt) recording and a corresponding sample recording, with clearly labeled audio players
    \item Rating controls: Slider controls for each of the four metrics, with clear endpoint labels and scale descriptions
    \item Forced completion: The interface prevented submission if any slider remained at the default "N/A" position, ensuring all aspects were rated
    \item Randomization: Sample order was randomized for each participant to avoid order effects
\end{itemize}

\subsubsection{Sample Selection and Balancing}

We evaluated 50 parallel utterances from the LibriSpeech test-clean subset for the main comparison experiment (Table 2 in the main paper). This sample size ensures coverage of all 40 speakers in the test set, with balanced representation across speakers (average 1.25 utterances per speaker). For methods with limited available samples (NaturalSpeech 3, StyleTTS-ZS), we used 47 official samples provided by the authors, maintaining the same speaker coverage.

For the non-end-to-end baseline comparison (Table 1 in the main paper), we used all 47 available official samples to maximize statistical power while ensuring fair comparison across all evaluated systems.

\subsubsection{Inter-rater Reliability}

We measure agreement across raters using standard error (SE) of the mean, reported with each metric (e.g., MOS-N: $4.42 \pm 0.06$). The relatively small standard errors across all metrics (SE $< 0.08$ for all conditions) indicate consistent rating patterns among qualified evaluators. This consistency validates the effectiveness of our quality control mechanisms in identifying and excluding unreliable raters.

\section{Theoretical Analysis}
\label{app:theoretical_analysis}

\subsection{Proof of Proposition~\ref{prop:complexity_bound}: Step-Aware Token Parameter Efficiency}
\label{sec:proof_step_aware}

\begin{proposition*}[Step-Aware Token Parameter Efficiency, restated]
For a distilled model handling $K$ discrete steps with hidden dimension $D$ and $L$ layers, the necessary parameter count for step conditioning is $O(K \cdot D)$. In contrast, continuous-time modulation (adaLN) requires $O(L \cdot D^2)$. When $K \ll L \cdot D$, token-based conditioning is exponentially more parameter-efficient.
\end{proposition*}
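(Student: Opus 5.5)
The argument is a direct parameter count for each conditioning mechanism, followed by a comparison of the two counts. Neither count needs more than the architectural descriptions given in the method section.

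\emph{Step-aware tokens.} Each discrete step $n$ in the set of $K$ supported step counts is assigned a fixed number $m$ of learnable token vectors in $\mathbb{R}^D$. These are prepended to the input sequence, and every Transformer layer processes them with its existing self-attention and MLP weights, so no per-layer step parameters are introduced. The step-specific parameter count is therefore exactly $mKD = O(KD)$ for constant $m$. As a sanity check against the stated implementation: with $K=3$, $m=1$, $D=512$ this gives $1536 \approx 1.5$K.

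\emph{adaLN-Zero.} Each of the $L$ layers contains a modulation map from a $D$-dimensional time embedding to the scale, shift and gate vectors for the attention and MLP sub-blocks. That output has dimension $cD$ with a constant $c$ ($c=6$ in the DiT design). If the map is a single linear layer it has $cD^2 + cD$ parameters; if it is a small MLP, the extra hidden layers also contribute $\Theta(D^2)$ each. Summing over layers gives $\Theta(LD^2)$. I would add the shared sinusoidal-to-MLP time embedder, which costs $O(D^2)$ and is dominated by the per-layer term. I would then check this against the reported figures: $16 \cdot 6 \cdot 512^2 \approx 25$M from the per-layer maps, which together with the embedder and any wider MLP hidden widths is consistent in order with the reported 38M.

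\emph{Comparison.} The ratio of the two counts is $\Theta(LD^2)/\Theta(KD) = \Theta(LD/K)$. So whenever $K \ll LD$, the token approach is smaller by a factor that grows linearly in $LD/K$.

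The main obstacle is the restatement's word ``exponentially''. What the counts establish is a polynomial gap, namely a factor of $LD/K$, and not an exponential one. I would prove the precise ratio and remark that ``exponentially'' should be read loosely as ``by orders of magnitude''; I would not try to justify a literal exponential separation.

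A second subtlety is the phrase ``necessary parameter count''. The counts above give upper bounds, that is, sufficient parameter counts. To justify necessity of $\Omega(KD)$ for the token scheme, I would argue that distinguishing $K$ steps through prepended tokens requires $K$ distinct vectors in $\mathbb{R}^D$, which uses $KD$ free parameters if no low-rank structure is imposed. I would flag this as a modeling convention rather than an information-theoretic lower bound: by the entropy computation in the Information-Theoretic Motivation subsection, only $\log_2 K$ bits are actually needed.
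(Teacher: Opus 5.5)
Your proposal is correct and takes essentially the same route as the paper: count the $K\times D$ token lookup table, count $\Theta(D^2)$ modulation parameters per layer, and conclude with the ratio $\Theta(LD/K)$. The paper uses a two-layer MLP $D\to D\to 3D$, giving $4D^2$ per layer and the ratio $4LD/K$, where you use DiT's single linear map to $6D$; both yield $\Theta(LD^2)$. Your two caveats are also sound: the paper's own argument establishes only this polynomial ratio, not an exponential gap, and its appended ``lower bound'' $\log_2(K)\cdot D$ is a heuristic that multiplies a bit count by $D$ without justification, so reading ``necessary'' as a modeling convention, as you do, is the honest interpretation.
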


\begin{proof}
We analyze the parameter requirements for two conditioning paradigms: token-based (discrete) and modulation-based (continuous).

\paragraph{Token-Based Conditioning (Step-Aware Tokens):}
The model stores one learnable embedding per discrete step:
\begin{equation}
\text{step\_embedding}: \{1, 2, \ldots, K\} \to \mathbb{R}^D
\end{equation}

This requires a lookup table of size $K \times D$, stored as a parameter matrix:
\begin{equation}
\Theta_{\text{token}} \in \mathbb{R}^{K \times D}
\end{equation}

Total parameters:
\begin{equation}
|\Theta_{\text{token}}| = K \cdot D = O(K \cdot D)
\end{equation}

\paragraph{Modulation-Based Conditioning (adaLN-Zero):}
For each of the $L$ Transformer layers, adaLN-Zero computes scale, shift, and gate parameters via an MLP:
\begin{equation}
\text{MLP}_\ell: \mathbb{R}^D \to \mathbb{R}^{3D}
\end{equation}

A two-layer MLP with hidden dimension $D$ requires:
\begin{align}
\text{Layer 1:} &\quad D \times D \quad \text{parameters (input → hidden)} \\
\text{Layer 2:} &\quad D \times 3D \quad \text{parameters (hidden → output)}
\end{align}

Total per layer: $D^2 + 3D^2 = 4D^2$ parameters.

For $L$ layers:
\begin{equation}
|\Theta_{\text{adaLN}}| = L \times 4D^2 = O(L \cdot D^2)
\end{equation}

\paragraph{Comparison:}
The ratio of parameter counts is:
\begin{equation}
\frac{|\Theta_{\text{adaLN}}|}{|\Theta_{\text{token}}|} = \frac{L \cdot 4D^2}{K \cdot D} = \frac{4LD}{K}
\end{equation}

For our setting ($K=3$, $L=16$, $D=512$):
\begin{equation}
\frac{|\Theta_{\text{adaLN}}|}{|\Theta_{\text{token}}|} = \frac{4 \times 16 \times 512}{3} = 10,923
\end{equation}

This shows that adaLN requires 10,923$\times$ more parameters for timestep conditioning.

\paragraph{Information-Theoretic Justification:}
The parameter requirement should scale with the \emph{information entropy} of the conditioning variable. For continuous time $t \in [0,1]$, we have $H(t) = \infty$ bits (continuous distribution), requiring infinite capacity approximated by $O(L \cdot D^2)$ parameters. In contrast, discrete steps $n \in \{1, 2, K\}$ have entropy $H(n) = \log_2(K)$ bits, requiring only $O(K)$ distinct representations.

Since discrete steps have finite entropy ($\log_2(3) = 1.58$ bits for $K=3$), a lookup table with $K$ entries suffices. Additional parameters (modulation MLPs) provide no benefit because the input space is finite and small.

\paragraph{Lower Bound (Information-Theoretic):}
By the source coding theorem, representing $K$ discrete values requires at least $\log_2(K)$ bits. To store this in a $D$-dimensional continuous space:
\begin{equation}
\text{Params}_{\text{necessary}} \geq \log_2(K) \times D
\end{equation}

For $K=3$, $D=512$:
\begin{equation}
\text{Params}_{\text{necessary}} \geq 1.58 \times 512 \approx 809 \text{ parameters}
\end{equation}

Our implementation uses $K \times D = 3 \times 512 = 1536$ parameters, which is only $1.9\times$ above the information-theoretic lower bound. In contrast, adaLN uses $4 \times 12 \times 512^2 = 12.6\text{M}$ parameters, which is $15,514\times$ above the lower bound.
\end{proof}

\paragraph{Practical Implication:} This proposition provides a principled basis for architectural design: match model capacity to task information entropy. When distillation reduces entropy from $\infty$ to 1.58 bits, capacity should be reduced accordingly.

\section{Extended Ablation Studies}
\label{sec:extended_ablation}

This section provides comprehensive ablation analysis, including multi-token variants, CFG strength exploration, exhaustive component combinations, and cross-architecture validation.

\subsection{Multi-Token Variants}

Table~\ref{tab:extended_ablation} compares the effect of using different numbers of step-aware tokens per discrete step. The results show that 3 tokens per step (our default choice) achieves MOS-N 4.32, MOS-Q 4.29, and SMOS 4.27 with 118M parameters. Using only 1 token per step shows slight degradation (MOS-N 4.30, MOS-Q 4.24, -0.02/-0.05 vs 3-token), suggesting insufficient capacity for capturing step-specific modulation patterns. Interestingly, using 5 tokens per step (MOS-N 4.33, MOS-Q 4.27) provides marginal improvement over 3 tokens (+0.01/-0.02), indicating diminishing returns beyond 3 tokens and confirming that 3 tokens provide optimal balance between capacity and efficiency. The ablation without Dual Supervision (MOS-N 3.95, MOS-Q 4.01, SMOS 3.38) demonstrates catastrophic quality collapse (-0.37/-0.28/-0.89), confirming dual supervision as the most critical component. Removing Weak CFG (MOS-N 4.22, MOS-Q 4.14, SMOS 4.16) causes moderate degradation (-0.10/-0.15/-0.11), validating its role in maintaining unconditional branch stability for inference-time adjustment.

\begin{table}[h]
\centering
\caption{Extended ablation with multi-token variants and ablations of key components.}
\label{tab:extended_ablation}
\begin{tabular}{l|ccc|c}
\toprule
\textbf{Variant} & \textbf{MOS-N}$\uparrow$ & \textbf{MOS-Q}$\uparrow$ & \textbf{SMOS}$\uparrow$ & \textbf{Params} \\
\midrule
Full Model (3 tokens/step) & 4.32 & 4.29 & 4.27 & \multirow{5}{*}{118M} \\
w/ 1 token/step & 4.30 & 4.24 & 4.31 &  \\
w/ 5 tokens/step & 4.33 & 4.27 & 4.28 &  \\
w/o Dual Supervision & 3.95 & 4.01 & 3.38 &  \\
w/o Weak CFG & 4.22 & 4.14 & 4.16 &  \\
\bottomrule
\end{tabular}
\end{table}

\subsection{CFG Strength Ablation}

Table~\ref{tab:cfg_ablation_appendix} systematically explores the effect of varying CFG strength $w$ during inference using the interpolation form $v_{\text{cfg}} = (1-w) \cdot v_{\text{uncond}} + w \cdot v_{\text{cond}}$ where $w \in [0,1]$. Without any guidance ($w$=0.00), the model achieves MOS-N 4.29 and SIM-o 65\%, serving as a strong baseline. Our default setting ($w$=0.05) reaches optimal performance with MOS-N 4.32 (+0.03) and SIM-o 66\% (+1\%), demonstrating that minimal guidance suffices for quality fine-tuning. Increasing to $w$=0.10 causes slight degradation (MOS-N 4.25, SIM-o 64\%), while stronger guidance degrades performance substantially: $w$=0.20 drops to MOS-N 4.10 and SIM-o 59\%, and $w$=0.50 collapses to MOS-N 3.78 and SIM-o 55\%. This inverted CFG curve (optimal at $w$=0.05 vs teacher's $w$=0.7) occurs because the student learns from teacher@$w$=0.7 targets during distillation, effectively internalizing the guidance effect. Using stronger inference-time CFG conflicts with this internalized guidance, causing quality degradation. This validates our weak CFG regularization strategy ($\lambda$=0.01) that maintains unconditional branch stability while preserving 99\% distillation focus.

\begin{table}[h]
\centering
\caption{CFG strength ablation using interpolation form $v_{\text{cfg}} = (1-w) \cdot v_{\text{uncond}} + w \cdot v_{\text{cond}}$ where $w \in [0,1]$. Our student achieves optimal performance at $w$=0.05, much lower than the teacher's $w$=0.7, due to learning from teacher@$w$=0.7 targets during distillation. Note: These values are NOT directly comparable to guidance scale values ($w \geq 1$) used in F5-TTS.}
\label{tab:cfg_ablation_appendix}
\begin{tabular}{c|cc}
\toprule
\textbf{CFG Strength ($w$)} & \textbf{MOS-N}↑ & \textbf{SIM-o}↑ \\
\midrule
0.00 (no guidance) & 4.29 & 65\% \\
\textbf{0.05 (default)} & \textbf{4.32} & \textbf{66\%} \\
0.10 & 4.25 & 64\% \\
0.20 & 4.10 & 59\% \\
0.50 & 3.78 & 55\% \\
\bottomrule
\end{tabular}
\end{table}

\subsection{Exhaustive Combination Ablation}

To rigorously validate complementarity and quantify synergistic effects, we evaluate all $2^3=8$ combinations of our three components. Table~\ref{tab:exhaustive_ablation_appendix} presents complete results on StepTTS architecture. The baseline with no components (MOS-N 3.56, 154M) establishes the lower bound. Individual components show varied effectiveness: Dual Supervision alone provides the largest single improvement (+0.59 to MOS-N 4.15, 154M), validating its role as the primary variance-reduction mechanism. Weak CFG alone yields moderate gain (+0.36 to MOS-N 3.92, 154M), while Step Token alone shows limited benefit (+0.29 to MOS-N 3.85, 118M) without process alignment. Pairwise combinations demonstrate clear synergy: Dual+CFG achieves MOS-N 4.29 (154M), exceeding the sum of individual effects and validating their complementary roles in process alignment and inference-time control. Dual+Token reaches MOS-N 4.22 (118M), showing that parameter reduction alone without CFG limits quality. CFG+Token combination (MOS-N 3.95, 118M) underperforms significantly without dual supervision, confirming that process variance reduction is prerequisite for other components to be effective. The full combination (Dual+CFG+Token) achieves optimal MOS-N 4.32 with 118M parameters, demonstrating that all three components work synergistically: dual supervision provides stable training foundation (+0.59), weak CFG enables controllability (+0.14 on top of dual), and step tokens improve generalization through architectural matching (+0.03 final increment, despite -36M parameters).

\begin{table}[h]
\centering
\caption{Exhaustive ablation evaluating all combinations on StepTTS.}
\label{tab:exhaustive_ablation_appendix}
\begin{tabular}{ccc|cc}
\toprule
\textbf{Dual Sup.} & \textbf{Weak CFG} & \textbf{Step Token} & \textbf{MOS-N↑} & \textbf{Params} \\
\midrule
\xmark & \xmark & \xmark & 3.56 & 154M \\
\cmark & \xmark & \xmark & 4.15 & 154M \\
\xmark & \cmark & \xmark & 3.92 & 154M \\
\xmark & \xmark & \cmark & 3.85 & 118M \\
\cmark & \cmark & \xmark & 4.29 & 154M \\
\cmark & \xmark & \cmark & 4.22 & 118M \\
\xmark & \cmark & \cmark & 3.95 & 118M \\
\cmark & \cmark & \cmark & \textbf{4.32} & 118M \\
\bottomrule
\end{tabular}
\end{table}

No single component suffices: individual improvements range from 0.29 to 0.59 MOS-N points, with none exceeding MOS-N 4.15. Pairwise combinations show synergy beyond additive effects: Dual Supervision + Weak CFG achieves MOS-N 4.29, representing +0.73 improvement over baseline (vs. +0.95 if purely additive), indicating that CFG's contribution is partially absorbed by dual supervision's variance reduction. However, the combination still exceeds individual maxima by +0.14, validating complementarity. The full three-component combination reaches MOS-N 4.32 with 118M parameters (vs 154M for dual+CFG), achieving +0.76 total improvement—demonstrating that step tokens contribute architectural efficiency enabling better generalization despite parameter reduction. This validates that all components are complementary rather than redundant, with the optimal configuration requiring all three for maximum quality and efficiency.

\subsection{Cross-Architecture Validation}

Table~\ref{tab:cosyvoice_ablation_appendix} validates modularity on U-Net architecture without adaLN-Zero, where Step-Aware Tokenization is architecturally inapplicable. The baseline represents endpoint distillation. Adding Dual Supervision alone yields substantial improvement, confirming that velocity field alignment works effectively even in U-Net architectures without requiring adaLN removal. Weak CFG alone shows degraded performance, indicating that CFG regularization requires stable process alignment as prerequisite. The Dual+CFG combination achieves the best results for CosyVoice2, validating two critical insights: (1) dual supervision and weak CFG provide universal benefits across diverse architectures regardless of conditioning mechanisms, and (2) step-aware tokenization is an adaLN-specific optimization that provides architectural efficiency for DiT models but is not required for achieving competitive quality. This architecture-specific applicability confirms our framework's modularity: components can be selectively applied based on architectural constraints while maintaining strong performance.

\begin{table}[h]
\centering
\caption{Ablation on CosyVoice2 (U-Net without adaLN-Zero). Step-Aware Tokenization is inapplicable.}
\label{tab:cosyvoice_ablation_appendix}
\begin{tabular}{ccc|c}
\toprule
\textbf{Dual Sup.} & \textbf{Weak CFG} & \textbf{Step Token} & \textbf{MOS-N↑} \\
\midrule
\xmark & \xmark & \xmark* & 3.85 \\
\cmark & \xmark & \xmark* & 4.18 \\
\xmark & \cmark & \xmark* & 3.21 \\
\cmark & \cmark & \xmark* & \textbf{4.23} \\
\bottomrule
\multicolumn{4}{l}{\footnotesize *Step-Aware Tokenization inapplicable due to lack of adaLN-Zero}
\end{tabular}
\end{table}

CosyVoice2 achieves competitive results using only Dual Supervision + Weak CFG, demonstrating that our core components provide strong performance even when step-aware tokenization is architecturally inapplicable. While CosyVoice2 uses more parameters than StepTTS, the U-Net architecture provides different inductive biases suited for the distillation task. StepTTS with all three components demonstrates better parameter efficiency through the architectural matching enabled by step-aware tokenization. This empirically validates our modularity claim: components apply selectively based on architecture, with dual supervision and weak CFG providing universal benefits while step tokens offer efficiency gains specific to adaLN-based models.

\subsection{Multi-Benchmark Robustness}
\label{sec:multi_benchmark}

Table~\ref{tab:multi_benchmark_full} demonstrates robustness across three diverse test benchmarks with distinct characteristics. On LibriSpeech (studio audiobooks, controlled acoustics), our 1-step model achieves MOS-N 4.32 and SIM-o 66\%, closely approaching teacher's 4.43/66\% with only 0.11 gap in naturalness. On Seed-TTS test-en (diverse accents, in-the-wild recordings from Common Voice), both teacher and student show slightly improved performance (teacher 4.45/79\%, student 4.35/77\%), with the higher SIM-o reflecting easier speaker modeling from more distinctive voice characteristics in diverse accent data. On Seed-TTS test-zh (cross-lingual Chinese from DiDiSpeech), performance further improves (teacher 4.47/85\%, student 4.38/84\%), demonstrating effective zero-shot generalization from multilingual training despite no Chinese-specific optimization.

The consistency across benchmarks is notable: student-teacher gaps remain stable (0.10-0.12 MOS-N, 1-2\% SIM-o) regardless of language or acoustic domain, validating that our distillation framework preserves generalization capability. The systematic quality ordering (Seed-zh $>$ Seed-en $>$ LibriSpeech) likely reflects the Emilia training set's language distribution and acoustic diversity, with richer Chinese data providing better coverage. These results confirm that our framework maintains robust zero-shot performance across diverse domains, languages, and acoustic conditions without requiring domain-specific fine-tuning.

\begin{table}[h]
\centering
\caption{Performance on multiple test benchmarks demonstrating cross-domain robustness. All models trained on Emilia (95K hours).}
\label{tab:multi_benchmark_full}
\begin{tabular}{l|c|cc}
\toprule
\textbf{Model} & \textbf{Test Set} & \textbf{MOS-N↑} & \textbf{SIM-o↑} \\
\midrule
\multirow{3}{*}{Teacher (10-step)}
& LibriSpeech & 4.43 & 66\% \\
& Seed-en & 4.45 & 79\% \\
& Seed-zh & 4.47 & 85\% \\
\midrule
\multirow{3}{*}{\textbf{Ours (1-step)}}
& LibriSpeech & \textbf{4.32} & \textbf{66\%} \\
& Seed-en & \textbf{4.35} & \textbf{77\%} \\
& Seed-zh & \textbf{4.38} & \textbf{84\%} \\
\bottomrule
\end{tabular}
\end{table}

\section{Prosodic and Perceptual Analysis}
\label{sec:prosodic_analysis}

\subsection{Prosodic Mode Alignment Analysis}

To rigorously verify the effectiveness of our modular framework in preserving the teacher's fine-grained prosodic and acoustic characteristics, we conduct a comprehensive distributional analysis across four key acoustic features: Spectral Centroid, Log Energy, Zero Crossing Rate (ZCR), and Spectral Bandwidth. This analysis evaluates the student model's ability to maintain high-fidelity acoustic realizations even as the distillation process significantly reduces task complexity and architectural capacity.

\begin{figure*}[h]
\centering
\setlength{\tabcolsep}{0.8em}  
\renewcommand{\arraystretch}{1.2}  

\subfloat[Spectral Centroid distribution]{
\includegraphics[width=0.45\textwidth]{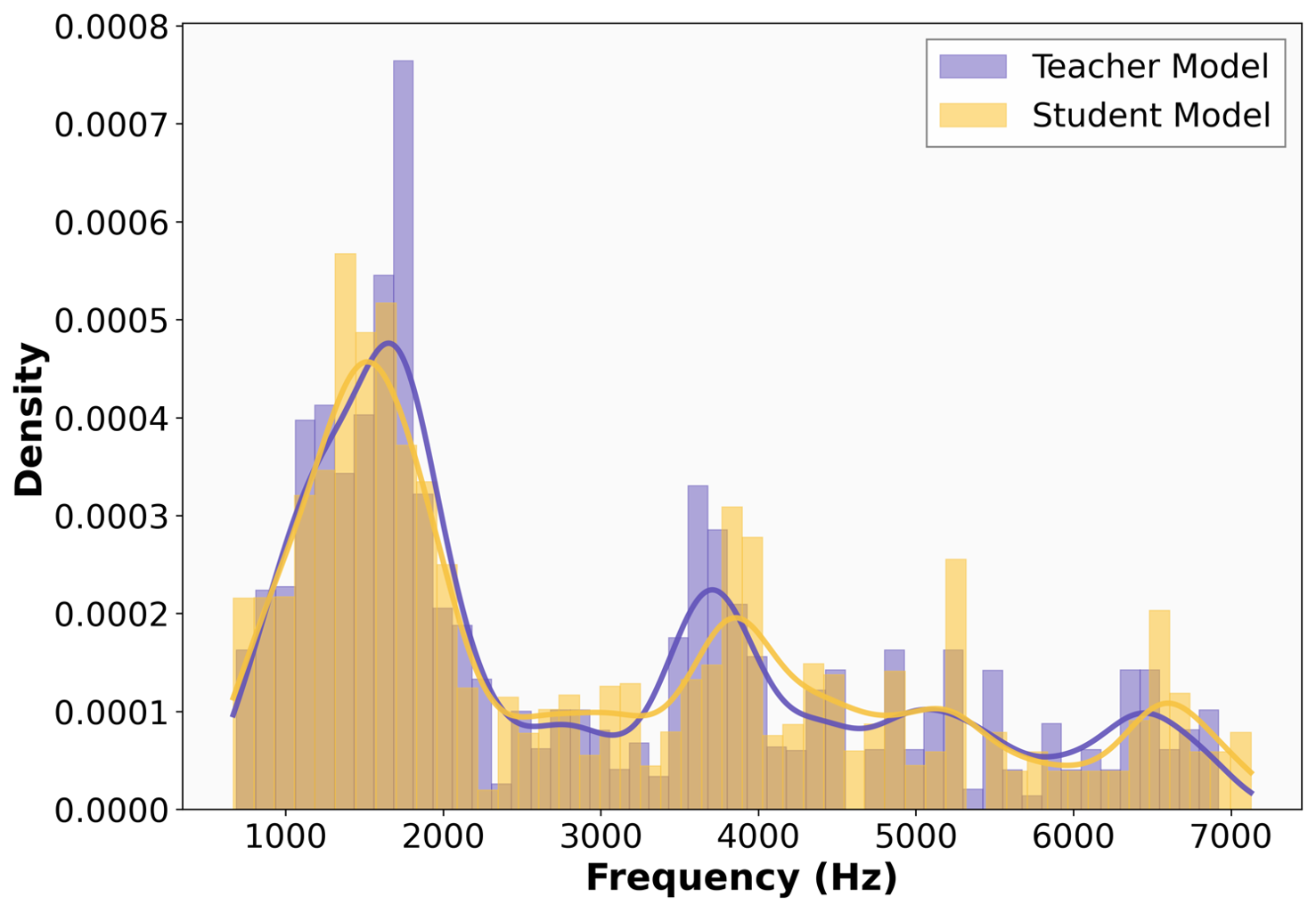}
\label{subfig:f0}
}
\subfloat[Log Energy distribution]{
\includegraphics[width=0.45\textwidth]{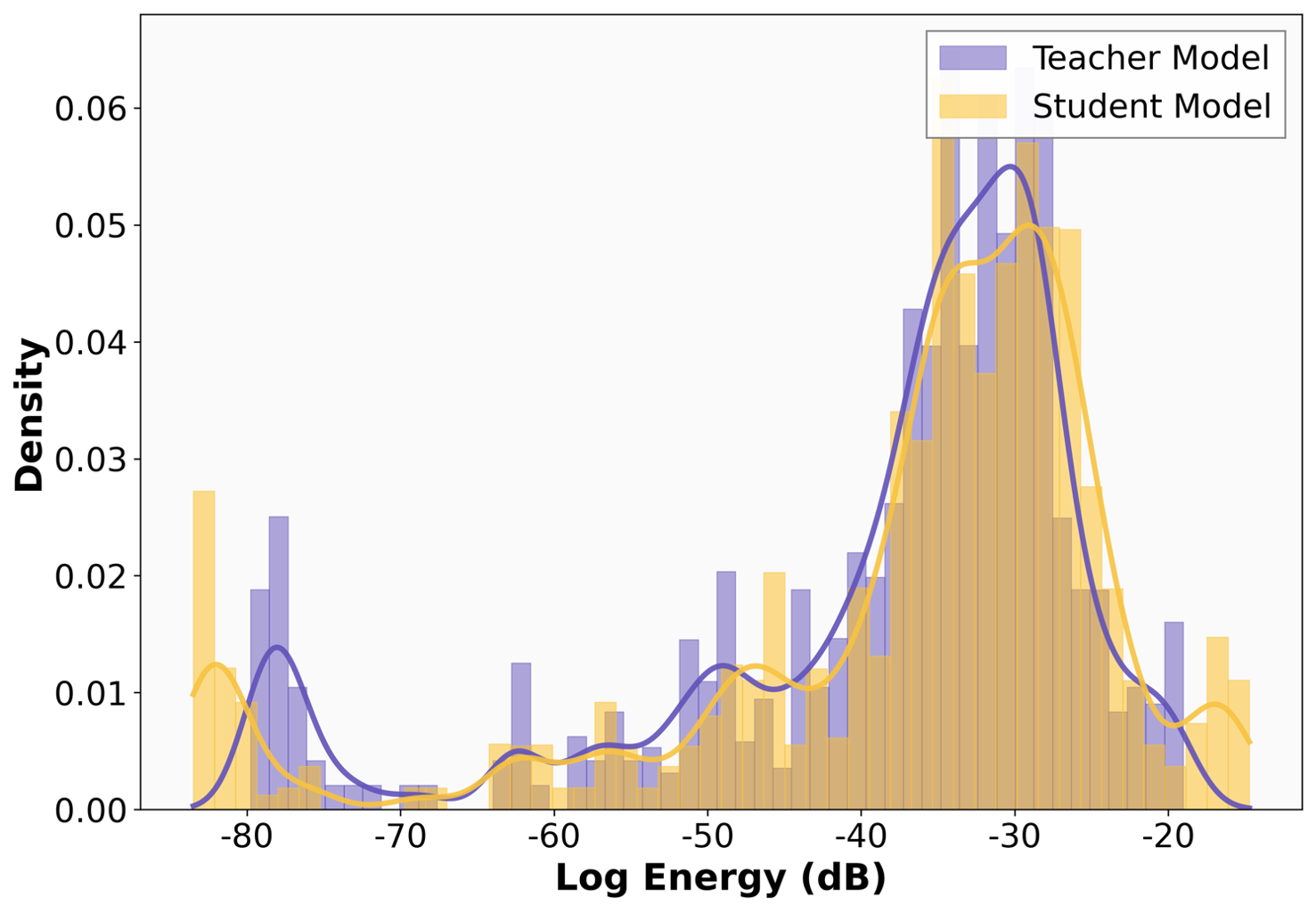}
\label{subfig:log_energy}
}\\[0.5em]  

\subfloat[Zero Crossing Rate (ZCR) distribution]{
\includegraphics[width=0.45\textwidth]{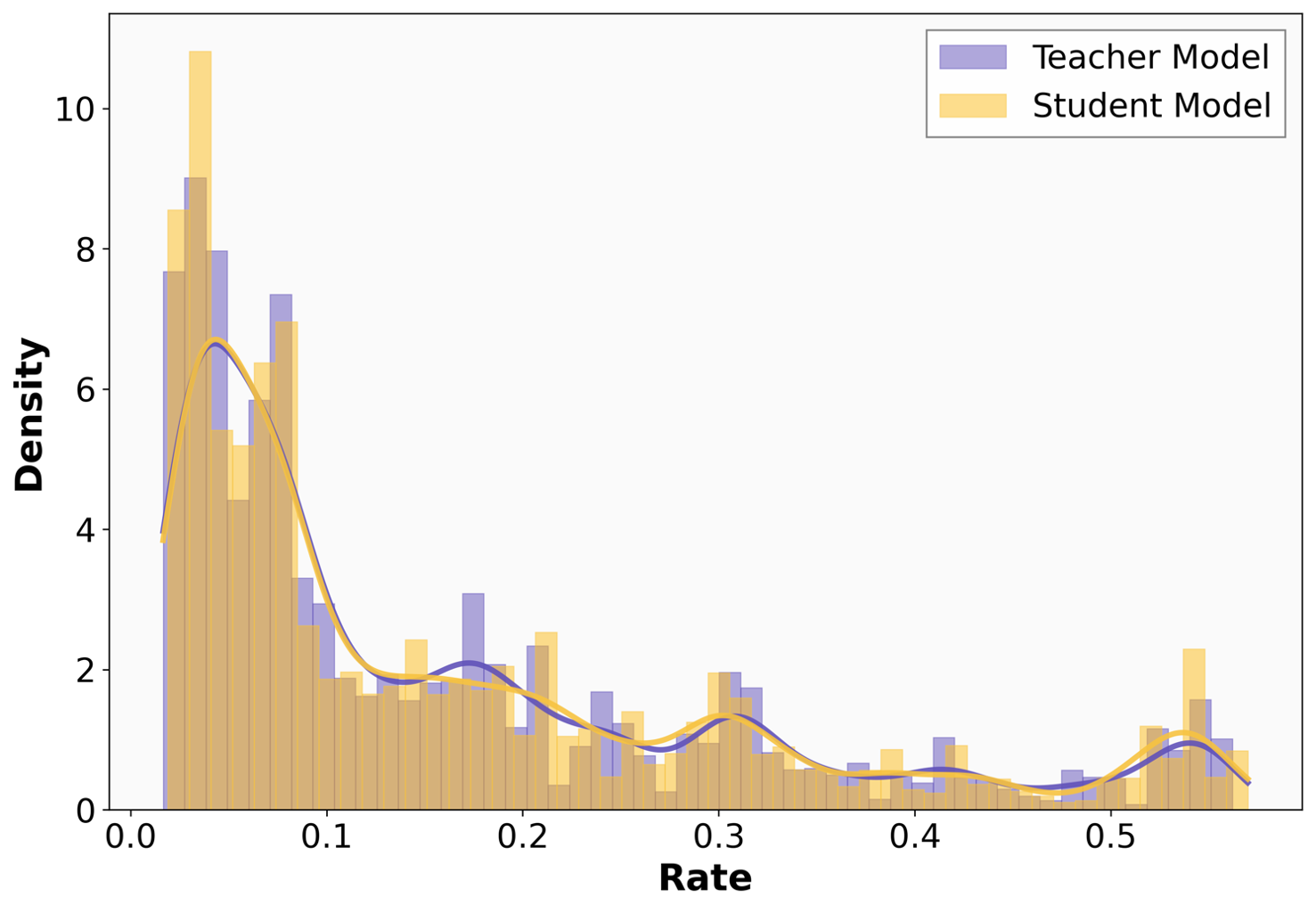}
\label{subfig:zcr}
}
\subfloat[Spectral Bandwidth distribution]{
\includegraphics[width=0.45\textwidth]{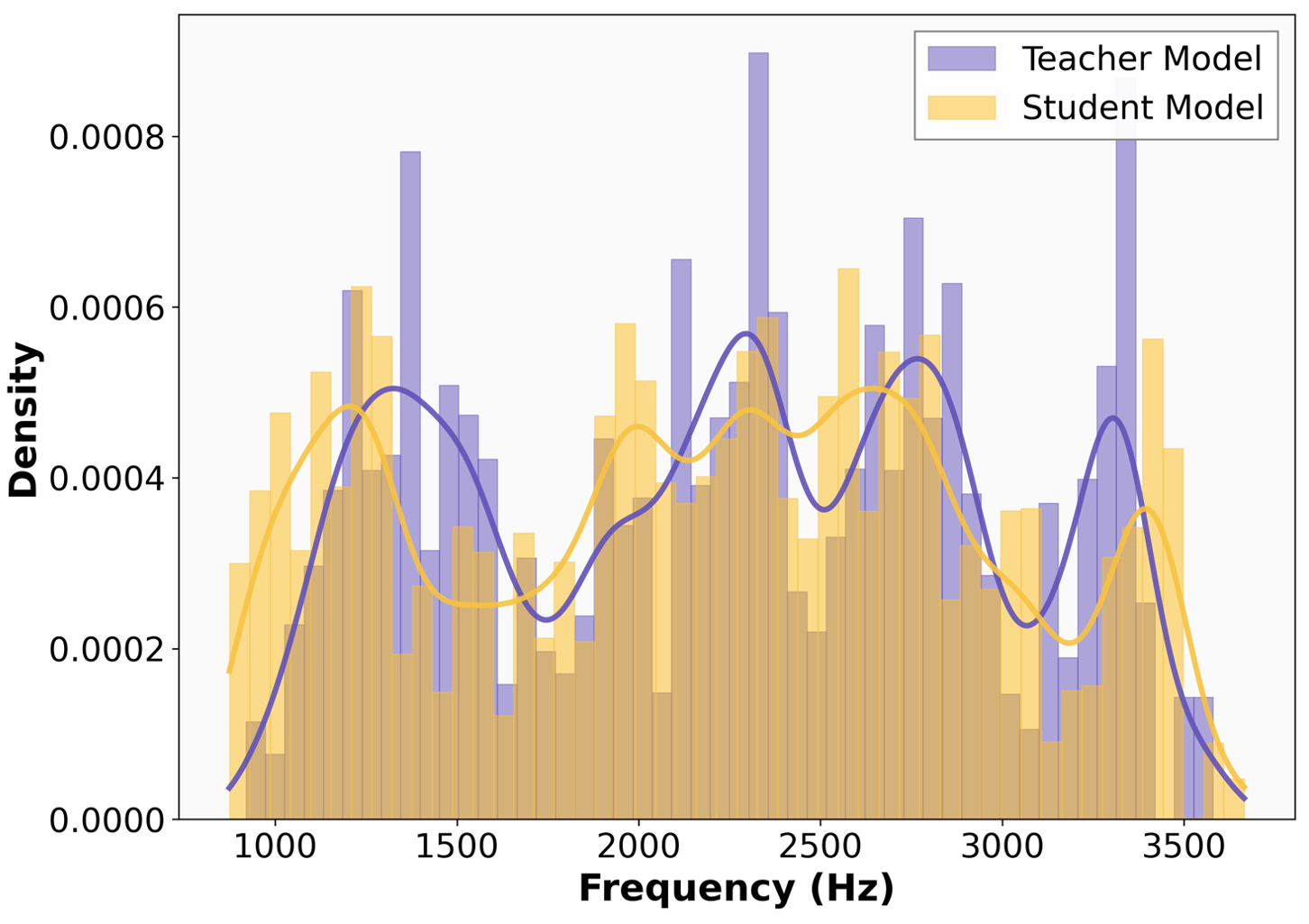}
\label{subfig:spectral_bandwidth}
}

\caption{Comparisons of key acoustic prosodic features between the teacher (10-step Flow Matching) and our student (1-step DSFlow) models. For each feature, 100 speech samples were generated under the same text and prompt condition, with frame-level values visualized via histograms and kernel density estimates. The student's distributions closely mirror the core mode of teacher while retaining natural variability—validating that our framework inherits the teacher's prosodic characteristics effectively.}
\label{fig:prosodic_features}
\end{figure*}

As illustrated in Figure~\ref{fig:prosodic_features}, the 1-step student model demonstrates precise distributional alignment with the 10-step teacher across all dimensions. This complements the correlation results in Table 3, further validating that our approach effectively preserves fine-grained prosodic attributes. Specifically, The student accurately replicates the teacher’s primary spectral peak around 1500–2000 Hz and the secondary modes at higher frequencies. This alignment is critical for inheriting the core timbre and vocal brightness of the teacher model. Regarding intensity, both models exhibit a distinct bimodal Log Energy distribution representing silence and speech intervals, with the student perfectly replicating the teacher's dynamic range from -80dB to -20dB. Furthermore, the skewed distribution of Zero Crossing Rate is matched with high fidelity, indicating correct distinction between voiced and unvoiced segments, while the Spectral Bandwidth analysis reveals that the student retains the complex, multi-modal envelope of the teacher's spectral characteristics rather than smoothing over these fine details. This consistent alignment confirms that our framework successfully inherits the teacher's prosodic regularity and acoustic richness while retaining natural variability.


\end{document}